\newtheorem{theorem}{Theorem}[section]
\newtheorem{lemma}[theorem]{Lemma}
\newtheorem{corollary}[theorem]{Corollary}
\newtheorem{conjecture}[theorem]{Conjecture}
\newcommand{\bb}[1]{\mathbb{#1}}
\begin{document}

\begin{center}

{\Large \bf Large Supports are required for Well-Supported Nash Equilibria}\\ [1cm]

{\sc Yogesh Anbalagan}\footnote[1]{School of Computer Science, McGill University. Email: {\tt yogesh.anbalagan@mail.mcgill.ca}},
{\sc Hao Huang}\footnote[2]{Institute for Mathematics and its Applications, University of Minnesota. Email: {\tt huanghao@ima.umn.edu}},
{\sc Shachar Lovett}\footnote[3]{Computer Science \& Engineering Department, University of California, San Diego. Email: {\tt slovett@cs.ucsd.edu}},\\
{\sc Sergey Norin}\footnote[4]{Department of Mathematics and Statistics, McGill University. Email: {\tt snorin@math.mcgill.ca}},
{\sc Adrian Vetta}\footnote[5]{Department of Mathematics and Statistics,
and School of Computer Science, McGill University.
Email: {\tt vetta@math.mcgill.ca}},
and 
{\sc Hehui Wu}\footnote[6]{Department of Mathematics,  University of Mississippi.
Email: {\tt hhwu@olemiss.edu}}
 \\[.5cm]
 
\end{center}

\begin{abstract}
We prove that for any constant $k$ and any $\epsilon<1$, there exist bimatrix win-lose games for which every $\epsilon$-WSNE
requires supports of cardinality greater than $k$.
To do this, we provide a graph-theoretic characterization of win-lose games
that possess $\epsilon$-WSNE with constant cardinality supports. We then apply a 
result in additive number theory of Haight \cite{Hai73}
to construct win-lose games that do not satisfy the requirements of the characterization.
These constructions disprove graph theoretic conjectures of 
Daskalakis, Mehta and Papadimitriou~\cite{DMP09} and Myers~\cite{Mye03}.
\end{abstract}

\section{Introduction}
A Nash equilibrium of a bimatrix game $(A,B)$ is a pair of strategies that are mutual best-responses.
Nash equilibria always exist in a finite game~\cite{N51}, but 
finding one is hard, unless $PPAD \subseteq P$~\cite{CDT09}. 
This has lead to the study of relaxations of the equilbrium concept.
A notable example is an \emph{$\epsilon$-approximate Nash equilibrium} ($\epsilon$-NE).  
Here, every player must receive an expected payoff within $\epsilon$ of their best response payoff. 
Thus $\epsilon$-NE are numerical relaxations of Nash Equilibria.
Counterintuitively, however, given that Nash's existence result is via a fixed point theorem,
Nash equilibria are intrinsically combinatorial objects. In particular, the crux of the equilibrium problem
is to find the supports of the equilibrium.
In particular, at an equilibrium, the supports of both strategies consist only of best responses.
This induces a combinatorial relaxation called an \emph{$\epsilon$-well supported approximate Nash 
equilibrium} ($\epsilon$-WSNE). Now the content of the supports are restricted, but less stringently than in
an exact Nash equilibrium.
Specifically, both players can only place positive probability on strategies that have payoff
within $\epsilon$ of a pure best response.

Observe that in an $\epsilon$-NE, no restriction is placed on the supports of the strategies. 
Consequently, a player might place probability on a strategy that is arbitrarily far from being a best response!
This practical deficiency is forbidden under $\epsilon$-WSNE.
Moreover, the inherent combinatorial structure of $\epsilon$-WSNE has been extremely useful
in examining the hardness of finding Nash equilibria.
Indeed, Daskalakis, Goldberg and Papadimitriou~\cite{DGP09} introduced $\epsilon$-WSNE in proving 
the PPAD-completeness of finding a Nash equilibrium in multiplayer games. 
They were subsequently used as the notion of approximate equilibrium by Chen, Deng and Teng~\cite{CDT09} 
when examining the hardness of bimatrix games.

This paper studies the (non)-existence of $\epsilon$-WSNE with small supports.
Without loss of generality, we may assume that all payoffs in $(A,B)$ are in $[0,1]$. 
Interestingly, for $\epsilon$-NE, there is then a simple $\frac12$-NE with supports
of cardinality at most two ~\cite{DMP09}. Take a row $r$. Let column $c$ be a best response to $r$, and
let $r'$ be a best response to $c$. 
Suppose the row player places probability $\frac12$ on $r$ and $r'$, and 
the column player plays column $c$ as a pure strategy. It is easy to verify that this
is a $\frac12$-NE. On the other hand, Alth\"ofer \cite{Alt94} showed the existence of zero-sum games for 
which every $\epsilon$-NE, with $\epsilon<\frac14$, requires supports of cardinality at least $\log n$. 
This result is almost tight; a probabilistic argument shows the existence of
$\epsilon$-NE with supports of cardinality $O(\frac{log \, n}{\epsilon^2})$, for any $\epsilon > 0$; see Lipton et al.~\cite{LMM03}. 

For the case of  well-supported equilibria, Anbalagan et al. \cite{ANS13} recently showed the existence of win-lose games 
for which every $\epsilon$-WSNE, with $\epsilon<\frac23$, require supports of cardinality at least $\sqrt[3]{\log n}$. 
They also proved, in contrast to $\epsilon$-NE, that with supports of cardinality at most two, it is not possible to guarantee
the existence of an $\epsilon$-WSNE, for any $\epsilon<1$.


The outstanding open problem in the area is whether there is a constant $k$ and an $\epsilon<1$ 
such that, for any bimatrix game, there is a $\epsilon$-WSNE
with supports of cardinality at most $k$. In the paper we prove this is not the case.
This result illustrates a fundamental structural distinction between
$\epsilon$-WSNE and $\epsilon$-NE. This structural distinction also has practical implications 
with regards to behavioural models and popular equilibria search algorithms that focus upon small 
supports. The key to our result is the disproof of graph theoretic conjectures of 
Daskalakis, Mehta and Papadimitriou~\cite{DMP09} and Myers~\cite{Mye03}
via an old result in additive number theory of
Haight~\cite{Hai73}.

\section{WSNE and a Graph Theoretic Conjecture}

A bimatrix game is a 2-player game with $m \times n$ payoff matrices $A$ and $B$. We consider normal form games 
with entries in the payoff matrices in $[0,1]$. A pair of mixed strategies $\{p,q\}$ forms an {\em$\epsilon$-well supported
Nash equilibrium} ({\em $\epsilon$-WSNE)} if every pure strategy in the support of $p$ (resp. $q$) is 
an $\epsilon$-approximate best response to $q$ (resp $p$). Thus $\{p,q\}$ forms an $\epsilon$-WSNE if and only if:
$$ \forall i: {p}_i > 0 \: \Rightarrow \quad {e_i}^TAq \geq {e_j}^TAq - \epsilon  \quad \forall j=1,..,m $$
and
$$ \forall i: {q}_i > 0 \: \Rightarrow \quad {p}^TBe_i \geq {p}^TBe_j - \epsilon \quad \forall j=1,..,n $$

To analyse well-supported equilibria in a win-lose game $(A,B)$,  Daskalakis et al. \cite{DMP09}
applied a {\em decorrelation transformation} to obtain a pair of decorrelated matrices $(A^*,B^*)$.
The exact details of this decorrelation transformation are not important here.
What is pertinent, however, is that the $n\times n$ square $0-1$ matrix $A^*$ induces a directed, possibly non-bipartite,
graph $H=(V,E)$. There are $n$ vertices in $V$, and there is an arc $ij\in E$ if an only if $A^*_{ij}=1$. 
Moreover, Daskalakis et al. proved that the original win-lose game has
a $(1-\frac{1}{k})$-WSNE with supports of cardinality at most $k$ if $H$
contains either a directed cycle of length $k$ or a set of $k$ undominated\footnote{A set $S$ is \emph{undominated} if 
there is no vertex $v$ that has an arc to every vertex in $S$.}  vertices.
Furthermore, they conjectured that every directed graph contains either a small cycle
or a small undominated set.

\begin{conjecture}\label{conj:kl}\cite{DMP09}
There are integers $k$ and $l$ such that every digraph either has a cycle of length at most $k$ or an undominated set of $l$ vertices. 
\end{conjecture}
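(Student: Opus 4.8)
The plan is to establish the conjecture by contraposition: assuming the digraph $H=(V,E)$ on $n$ vertices has \emph{no} directed cycle of length at most $k$, I would produce an undominated set of $l$ vertices. Equivalently, I would assume toward a contradiction that \emph{every} set $S\subseteq V$ with $|S|=l$ is dominated --- some vertex has an arc to each member of $S$ --- and try to manufacture a short directed cycle out of this abundance of dominators. The guiding intuition is that being able to dominate every $l$-set is a strong covering requirement that forces $H$ to be dense, and that sufficiently dense digraphs contain short directed cycles.

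The first step is to convert the covering hypothesis into a degree condition by double counting. A vertex $v$ of out-degree $d^{+}(v)$ dominates precisely the $l$-sets contained in its out-neighbourhood, that is $\binom{d^{+}(v)}{l}$ of them, so the assumption that all $\binom{n}{l}$ sets are dominated gives $\sum_{v\in V}\binom{d^{+}(v)}{l}\ge\binom{n}{l}$. By convexity this forces the average out-degree of $H$ to be at least of order $n^{1-1/l}$. The tool I would then aim at is a Caccetta--H\"aggkvist--type bound: a digraph on $N$ vertices with \emph{minimum} out-degree at least $d$ contains a directed cycle of length at most $\lceil N/d\rceil$; feeding in $N\le n$ and a minimum out-degree of order $n^{1-1/l}$ would yield a short cycle and hence the desired contradiction.

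Two gaps separate the density I can extract from the cycle I want, and bridging them is the crux. First, the count only controls the \emph{average} out-degree, whereas the cycle bound needs the \emph{minimum}; and for out-degrees this passage is genuinely delicate, since a transitive tournament has average out-degree of order $n$ yet is acyclic and has no subdigraph of large minimum out-degree (such a tournament does satisfy the undominated-set alternative, which tells me that the domination hypothesis must be exploited well beyond its average-degree shadow). Second, and more seriously, even granting minimum out-degree $d$ of order $n^{1-1/l}$, the bound $\lceil N/d\rceil$ is of order $n^{1/l}$, which \emph{grows with $n$} rather than being bounded by an absolute constant $k$; forcing a constant-length cycle would require minimum out-degree \emph{linear} in $n$, far more than domination appears to supply. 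I therefore expect the decisive difficulty to lie exactly here, and I would try to close it either by mining additional structure (for instance the translate structure available when $H$ is vertex-transitive) or by tracking how the dominators of overlapping $l$-sets interlock: greedily grow a backward path $\cdots\to u_{3}\to u_{2}\to u_{1}$, where each $u_{i+1}$ dominates an $l$-set containing $u_{i}$, and hope to force some dominator to land on a recent vertex of the path, thereby short-circuiting it into a cycle of bounded length.
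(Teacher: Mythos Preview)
Your proposal attempts to \emph{prove} the conjecture, but the paper \emph{disproves} it: Theorem~\ref{thm:klexist} shows that for every pair of positive integers $k$ and $l$ there exists a finite $(k,l)$-digraph, i.e.\ a digraph with girth at least $k$ in which every $l$-subset is dominated. So the statement you are trying to establish is false, and no argument along your lines can succeed.

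You in fact put your finger precisely on why. The double-counting step is correct and yields average out-degree of order $n^{1-1/l}$, and the Caccetta--H\"aggkvist bound then gives a cycle of length of order $n^{1/l}$. But this grows with $n$, and that is not a technical nuisance to be patched---it is the truth. The constructions in Section~\ref{sec:counterex} realize exactly this behaviour: using Haight's theorem (Theorem~\ref{thm:Haight}) one builds, for any $\kappa$, a Cayley-type digraph on $\bb{Z}_q$ with a generating set $Y$ satisfying $Y-Y=\bb{Z}_q$ (so every pair is dominated) but $0\notin(k)Y$ (so there is no short cycle); taking $(l-1)$-st powers then upgrades $2$-domination to $l$-domination while keeping the girth large. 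Your ``greedy backward path'' idea cannot close the gap either, because in these vertex-transitive examples the dominators of overlapping $l$-sets are arranged so that no short closed walk is ever forced---that is precisely the content of the additive condition $0\notin(k)Y$. The honest conclusion from your own analysis is that the density you can extract is consistent with arbitrarily large girth, and the paper confirms this by explicit construction.
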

Indeed, they believed the conjecture was true for $k=l=3$ and, consequently, that every
bimatrix win-lose game has a $\frac23$-WSNE with supports of cardinality at most three. 
Interestingly, motivated by the classical Caccetta-Haggkvist conjecture \cite{CH78} 
in extremal graph theory, 
a similar conjecture was
made previously by Myers \cite{Mye03}.
\begin{conjecture}\label{conj:myers}\cite{Mye03}
There is an integer $k$ such that every digraph either has a cycle of length at most $k$ or an undominated set of two vertices. 
\end{conjecture}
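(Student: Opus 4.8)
\emph{The final statement is a conjecture that this paper in fact refutes, so what I would actually do is disprove it: for every integer $k$ I would construct a digraph with no directed cycle of length at most $k$ and no undominated set of two vertices, contradicting the claimed dichotomy.} The plan is to reduce both graph-theoretic requirements to a single question about an additive set by working inside a Cayley digraph. Fix a candidate value $k$. On the vertex set $\Int_n$ I take $H=\mathrm{Cay}(\Int_n,D)$, where $D\subseteq\Int_n$ with $0\notin D$ and there is an arc $x\to x+d$ for every $d\in D$; write $jD=\{d_1+\cdots+d_j : d_1,\dots,d_j\in D\}$ for the $j$-fold sumset.

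The first step is to establish the dictionary between $H$ and $D$. A vertex $w$ dominates the pair $\{u,v\}$ exactly when $u-w\in D$ and $v-w\in D$, and such a $w$ exists if and only if $u-v\in D-D$; hence $H$ has \emph{no} undominated pair precisely when $D-D=\Int_n$. Likewise, a directed closed walk $x_0\to x_1\to\cdots\to x_j=x_0$ corresponds to a zero-sum $d_1+\cdots+d_j=0$ with $d_i\in D$, and any such short zero-sum contains a short directed cycle, while conversely a cycle of length $j$ forces $0\in jD$. Thus $H$ has girth greater than $k$ if and only if $0\notin jD$ for every $1\le j\le k$ (the case $j=1$ being the assumption $0\notin D$).

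This leaves the purely additive task of exhibiting, for each $k$, a modulus $n$ and a set $D\subseteq\Int_n$ that is simultaneously a \emph{difference cover}, $D-D=\Int_n$, and \emph{low-order zero-sum-free}, $0\notin jD$ for $1\le j\le k$. I expect this to be the entire difficulty, because the two demands pull in opposite directions: forcing every residue to be a difference of two elements of $D$ wants $D$ spread across $\Int_n$, whereas forbidding short zero-sums naively wants $D$ squeezed into a short interval of length about $n/k$, where no $k$ elements can wrap around to $0$ --- and a set confined to such an interval cannot satisfy $D-D=\Int_n$. Reconciling these competing constraints is exactly the content of Haight's theorem \cite{Hai73}, which I would invoke as a black box to produce such a $D$ (applying a harmless translation of $D$ should Haight's statement be phrased as ``$jD\neq\Int_n$'' rather than ``$0\notin jD$'').

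With $D$ in hand, $H$ has directed girth exceeding $k$ yet no undominated pair, so it falsifies the conjecture for the candidate $k$; since $k$ was arbitrary, Myers' conjecture fails. I would close by noting that the identical framework refutes the stronger Daskalakis--Mehta--Papadimitriou conjecture once the condition $D-D=\Int_n$ is upgraded so that $D$ contains a translate of every $l$-element subset of $\Int_n$ (ensuring no $l$ vertices are undominated), and that these additively paradoxical digraphs are precisely what drive the paper's ultimate theorem on the non-existence of $\epsilon$-WSNE with bounded-cardinality supports.
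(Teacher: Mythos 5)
You are right that the statement is a conjecture the paper refutes, and your construction is essentially the paper's: a Cayley-type digraph on $\Int_q$ whose connection set comes from Haight's theorem, with the dictionary ``no undominated pair $\Leftrightarrow$ $D-D=\Int_q$'' and ``cycle of length $j$ $\Rightarrow$ $0\in (j)D$'' set up exactly as in Theorem~\ref{thm:k2digraph}. However, there is a genuine gap at the point where you invoke Haight as a black box. To kill all cycles of length at most $k$ you need $0\notin (j)D$ \emph{simultaneously for every} $1\le j\le k$, but Haight's theorem (as stated in Theorem~\ref{thm:Haight}) only controls the single sumset $(k)X$: it says $(k)X$ omits $l$ consecutive residues for one prescribed $k$. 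Your ``harmless translation'' remark does not close this: even if each $(j)X$ separately omitted some residues, you would need one common shift $y$ with $jy\notin (j)X$ for all $j\le k$ at once, and nothing in the cited statement provides that. Avoiding $0$ in one sumset $(k)X$ says nothing about shorter sumsets --- e.g.\ $0\in (2)X$ is perfectly compatible with $0\notin (3)X$.

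The missing idea is the divisibility trick the paper uses: to build a $(\kappa,2)$-digraph, apply Haight with the single exponent $k=(\kappa-1)!$ (any common multiple of $1,\dots,\kappa-1$ works), translate once to get $0\notin (k)Y$, and then observe that a cycle of length $s\le\kappa-1$ would give $0\in (s)Y$, hence $0\in (ts)Y$ for every $t\ge 1$, hence $0\in ((\kappa-1)!)Y$ --- a contradiction. This reduces the ``all $j\le k$'' requirement to a single application of the theorem and a single translation. With that step inserted your argument is correct and matches the paper's proof; your side remark on upgrading $D-D=\Int_n$ to handle $l$-element sets also diverges from the paper, which instead passes to the $(l-1)$-st power of the $(\cdot,2)$-digraph, but that concerns Conjecture~\ref{conj:kl} rather than the statement at hand.
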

Myers conjectured that this was true even for $k=3$, but Charbit \cite{Cha05} proved this special case to be false.

We say that $D$ is a \emph{$(k,l)$-digraph} if every directed cycle in $D$ has length at least $k$, and 
every $S \subseteq V(D)$ of cardinality at most $l$ is 
dominated. In Section \ref{sec:counterex}, we will prove that there exists a finite $(k,l)$-digraph
for every pair of positive integers $k$ and $l$. This will imply that Conjectures \ref{conj:kl} and \ref{conj:myers} 
are false.



\section{A Characterization for Games with Small Support $\epsilon$-WSNE.}

In this section, we show Daskalakis et al.'s sufficiency condition extends to a 
characterization of when a win-lose game has $\epsilon$-WSNE with constant supports. 
To do this, rather than non-bipartite graphs, it is more natural for
bimatrix games to work with bipartite graphs.
In particular, any win-lose game $(A,B)$ has a simple representation
as a bipartite directed graph $G=(R \cup C,E)$. 
To see this, let $G$ contain a vertex for each row and a vertex for each column. There exists an arc $(r_i,c_j)\in E$ if and 
only if $A_{ij}=1$. So $r_i$ is the best response for the row player against the strategy $c_j$
of the column player. Similarly, there exists an arc $(c_j,r_i) \in E$ if and only if $B_{ij}=1$. So, $c_j$ is a best 
response for the column player against the strategy $r_i$ of the row player. 

We will now show that a win-lose game has a $(1-\frac{1}{k})$-WSNE with supports of cardinality at most $k$ 
{\em if and only if} the corresponding directed bipartite graph has either a small cycle or a small set of undominated 
vertices. Thus we obtain a characterization of win-lose games that have $\epsilon$-WSNE with small 
cardinality supports.

It what follows, we will only consider undominated sets that are contained either in $R$ or in $C$
\begin{lemma}\label{lem:undominated6}
Let $G$ be a win-lose game with minimum out-degree at least one. 
If $G$ contains an undominated set of cardinality $k$ then there is a 
$(1-\frac{1}{k})$-WSNE with supports of cardinality at most~$k$.
\end{lemma}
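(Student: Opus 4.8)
The plan is to construct the equilibrium explicitly from the undominated set, handling the two WSNE conditions asymmetrically. Since we only consider undominated sets lying entirely in $R$ or entirely in $C$, assume without loss of generality that the undominated set $S$, with $|S| = k$, is contained in $C$; the case $S \subseteq R$ is symmetric, swapping the roles of the two players. I let the column player play the uniform distribution $q$ on $S$, so $q_j = 1/k$ for each $c_j \in S$ and $q_j = 0$ otherwise. The row player's strategy $p$ is then built to make every column of $S$ an approximate best response.

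For the row player's WSNE condition, the undominated property controls everything. In the bipartite graph the only arcs entering column vertices come from rows, so $S$ being undominated means no single row $r_i$ has an arc to every column of $S$; equivalently, $A_{ij} = 0$ for at least one $c_j \in S$. Hence against $q$ each row earns $e_i^T A q = \tfrac{1}{k}\sum_{c_j \in S} A_{ij} \leq \tfrac{k-1}{k} = 1 - \tfrac{1}{k}$. Since payoffs are nonnegative, the best-response payoff is at most $1 - \tfrac{1}{k}$, and therefore every row is automatically a $(1-\tfrac{1}{k})$-approximate best response to $q$. In particular, whatever support I choose for the row player, the first condition holds for free.

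For the column player's condition I invoke the minimum out-degree hypothesis. For each $c_j \in S$, out-degree at least one supplies an out-neighbour, i.e.\ a row $r_{i(j)}$ with an arc $(c_j, r_{i(j)})$, meaning $B_{i(j),j} = 1$. I set $T = \{ r_{i(j)} : c_j \in S\}$, a set of at most $k$ rows, and let the row player play the uniform distribution $p$ on $T$. Then each $c_j \in S$ satisfies $p^T B e_j \geq 1/|T| \geq 1/k$, because its chosen out-neighbour lies in $T$. As every column earns payoff at most $1$ against any $p$, I obtain $p^T B e_j \geq \tfrac{1}{k} = 1 - (1 - \tfrac{1}{k}) \geq \max_{j'} p^T B e_{j'} - (1 - \tfrac{1}{k})$, so each $c_j \in S$ is a $(1-\tfrac{1}{k})$-approximate best response to $p$. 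This verifies the second condition, and both supports, $T$ and $S$, have cardinality at most $k$.

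The only delicate point is the tight interplay of constants: the bound $1/|T| \geq 1/k$ must exactly meet the slack $1 - (1 - \tfrac{1}{k}) = 1/k$ allowed by the approximation, which is why both the cardinality bound $|T| \leq k$ and the uniform spreading are essential. The minimum out-degree assumption is precisely what guarantees the out-neighbours needed to build $T$; without it a column of $S$ could be forced to payoff $0$ while another column scores strictly higher, breaking the second condition. I expect the main thing to get right, rather than a genuine obstacle, is the clean separation of the two directions — the undominated property bounding the row player's best-response payoff from above, and the out-degree together with uniform spreading bounding the column-support payoffs from below.
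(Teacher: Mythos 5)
Your proof is correct and is essentially the paper's own argument with the roles of rows and columns interchanged: the undominated set is played uniformly by one player (capping the opponent's best-response payoff at $1-\frac{1}{k}$ via the domination condition), and the other player plays uniformly on a set of out-neighbours supplied by the minimum out-degree hypothesis (guaranteeing payoff at least $\frac{1}{k}$ on the support). The constants and both verification steps match the paper exactly.
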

\begin{proof}
Without loss of generality, let $U=\{r_1,...,r_k\}$ be the undominated set. 
Let the row player play a uniform strategy $p$ on these $k$ rows. Since $U$ is undominated, any
column has expected payoff at most $1-\frac{1}{k}$ against $p$.
Therefore every column $c_j$ is a $(1-\frac{1}{k})$-approximate best response against $p$.

By assumption, each row vertex $r_i$ has out-degree at least one.
Let $c_{f(i)}$ be an out-neighbour of $r_i$ (possibly $f(i)=f(j)$ for $j\neq i$). 
Now let the column player play a uniform strategy $q$ on $\{c_{f(i)}\}_{i=1}^k$. Because 
$q$ has support cardinality at most $k$, each pure strategy $r_i\in U$ has an expected
payoff at least $\frac{1}{k}$ against $q$.
Thus, these $r_i$'s are all $(1-\frac{1}{k})$-approximate best responses for the row player against $q$.
So $\{p,q\}$ is a $(1-\frac{1}{k})$-WSNE with supports of cardinality at most $k$.
\end{proof}

\begin{lemma}\label{lem:cycle6}
If $G$ contains a cycle of length $2k$ then there is
a $(1-\frac{1}{k})$-WSNE with supports of cardinality $k$.
\end{lemma}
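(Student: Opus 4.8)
The plan is to mimic the argument of Lemma~\ref{lem:undominated6}: exhibit an explicit pair of uniform strategies supported on the vertices of the cycle, and then check directly that every pure strategy in each support is guaranteed an expected payoff of at least $\frac{1}{k}$. Since the game is win-lose, all payoffs lie in $[0,1]$, so a pure best response is worth at most $1$; hence a guaranteed payoff of $\frac{1}{k}$ already certifies a $(1-\frac{1}{k})$-approximate best response. The key structural input is that $G$ is bipartite with parts $R$ and $C$, so any directed cycle must alternate between rows and columns.

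First I would fix notation for the cycle. A directed cycle of length $2k$ in the bipartite digraph $G$ can be written as $r_1 \to c_1 \to r_2 \to c_2 \to \cdots \to r_k \to c_k \to r_1$, containing exactly $k$ distinct row vertices and $k$ distinct column vertices. Let the row player play the uniform strategy $p$ on $\{r_1,\dots,r_k\}$ and let the column player play the uniform strategy $q$ on $\{c_1,\dots,c_k\}$; both supports then have cardinality $k$, as required.

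Next I would read the payoff guarantees off the arcs. Each arc $r_i \to c_i$ encodes $A_{r_i c_i}=1$, so the expected payoff of $r_i$ against $q$ is $\frac{1}{k}\sum_{j=1}^{k} A_{r_i c_j}\geq \frac{1}{k}A_{r_i c_i}=\frac{1}{k}$, making $r_i$ a $(1-\frac{1}{k})$-approximate best response. Symmetrically, each arc $c_i \to r_{i+1}$ (indices taken mod $k$) encodes $B_{r_{i+1} c_i}=1$, so the expected payoff of $c_i$ against $p$ is at least $\frac{1}{k}$, making $c_i$ a $(1-\frac{1}{k})$-approximate best response. Combining the two bounds shows that $\{p,q\}$ is the desired $(1-\frac{1}{k})$-WSNE.

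The only point requiring care — and the main (minor) obstacle — is the bookkeeping of arc directions: one must verify that the alternating structure of a directed cycle in a bipartite digraph pairs every row vertex with an out-arc into a support column (yielding the relevant $A$-entry) and every column vertex with an out-arc into a support row (yielding the relevant $B$-entry), so that \emph{every} pure strategy in both supports is covered. Unlike Lemma~\ref{lem:undominated6}, no minimum out-degree hypothesis is needed here, because the cycle itself furnishes an out-neighbour for each of its vertices.
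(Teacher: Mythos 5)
Your proof is correct and follows essentially the same route as the paper: uniform strategies on the $k$ row and $k$ column vertices of the cycle, with the cycle arcs guaranteeing each supported pure strategy an expected payoff of at least $\frac{1}{k}$. Your explicit indexing of the alternating arcs is just a more detailed rendering of the paper's observation that every vertex of the induced subgraph has out-degree at least one.
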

\begin{proof}
Let $W$ be a cycle of length $2k$ in $G$. Since $G$ is bipartite, $k$ of the vertices in the cycle are 
row vertices and $k$ are column vertices. Let $p$ be the uniform strategy on the rows in $W$ 
and let $q$ be the uniform strategy on the columns in $W$. We claim that $p$ and $q$ form
a $(1-\frac{1}{k})$-WSNE. To prove this, consider the subgraph $F$ induced by the vertices of $W$. 
Every vertex in $F$ has out-degree (and in-degree) at least one since $W\subseteq F$. So, every pure strategy in $p$, 
gives the row player an expected payoff of at least $\frac{1}{k}$ against $q$. Thus, every pure strategy in $p$ is a 
$(1-\frac{1}{k})$-best response for the row player against $q$. 
Similarly, every pure strategy in $q$ is a $(1-\frac{1}{k})$-best response for the column player against $p$. 
\end{proof}

Lemma \ref{lem:undominated6} and Lemma \ref{lem:cycle6} immediately give the following corollary.
\begin{corollary}\label{cor:main6}
Let $G$ be a win-lose game with minimum out-degree at least one. 
If $G$ contains a cycle of length $2k$ or an undominated set of cardinality $k$ 
then then the win-lose game has $(1-\frac{1}{k})$-WSNE with supports of cardinality at most $k$. \qed
\end{corollary}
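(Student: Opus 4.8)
The plan is to observe that the hypothesis of the corollary is a disjunction, and that each disjunct has already been discharged by one of the two preceding lemmas; the corollary is then merely an assembly of these two cases. I would therefore split on which of the two combinatorial structures $G$ is assumed to contain and invoke the corresponding lemma verbatim.

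First I would suppose that $G$ contains an undominated set of cardinality $k$. Since $G$ is assumed to have minimum out-degree at least one, the hypotheses of Lemma \ref{lem:undominated6} are met exactly, and that lemma directly produces a $(1-\frac{1}{k})$-WSNE whose supports have cardinality at most $k$. No further argument is needed in this case. Second, I would suppose instead that $G$ contains a cycle of length $2k$. Then Lemma \ref{lem:cycle6} applies and yields a $(1-\frac{1}{k})$-WSNE with supports of cardinality exactly $k$; since $k \leq k$, this is in particular a WSNE with supports of cardinality at most $k$, matching the stated bound. I would note in passing that the minimum out-degree condition plays no role in this second case: every vertex of the cycle already has out-degree at least one within the cycle itself, which is precisely what the proof of Lemma \ref{lem:cycle6} relies on, so the cycle branch is self-contained regardless of the global out-degree assumption.

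Because in both branches the conclusion is immediate from an already-established lemma, I do not expect any genuine obstacle here: all of the combinatorial and probabilistic content — constructing the uniform strategies and verifying the approximate best-response inequalities — has been carried out inside Lemmas \ref{lem:undominated6} and \ref{lem:cycle6}. The only point I would make explicit is the reconciliation of the two support-cardinality statements (\emph{at most} $k$ versus \emph{exactly} $k$), which is trivial since exactly $k$ implies at most $k$. Consequently the proof reduces to a two-line case analysis citing the two lemmas.
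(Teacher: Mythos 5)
Your proposal matches the paper exactly: the paper derives Corollary~\ref{cor:main6} as an immediate consequence of Lemma~\ref{lem:undominated6} and Lemma~\ref{lem:cycle6}, with precisely the two-case split you describe. Your additional observations (that ``exactly $k$'' implies ``at most $k$'' and that the out-degree hypothesis is only needed for the undominated-set branch) are correct and harmless refinements of the same argument.
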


Importantly, the converse also holds. 
\begin{lemma}\label{lem:mainconverse}
Let $G$ be a win-lose game with minimum out-degree at least one. 
If there is an $\epsilon$-WSNE (for any $\epsilon <1$) with supports of cardinality at most $k$ 
then $G$ either contains an undominated set of cardinality $k$ or contains 
a cycle of length at most $2k$. 
\end{lemma}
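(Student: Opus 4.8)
The plan is to analyse the two support sets directly. Write $P=\mathrm{supp}(p)\subseteq R$ and $Q=\mathrm{supp}(q)\subseteq C$, so that $|P|,|Q|\le k$, and note both are nonempty. I would run a case analysis according to whether $P$ and $Q$ are dominated. If either $P$ or $Q$ is undominated we are immediately done: it is an undominated set contained in $R$ (respectively $C$) of cardinality at most $k$. (Since any superset of an undominated set is again undominated, one may pad it up to cardinality exactly $k$ whenever a further vertex of that side is available.) Thus the only real work lies in the remaining case, in which both $P$ and $Q$ are dominated.

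The crucial step is to convert each domination hypothesis, together with $\epsilon<1$, into an out-degree statement inside $P\cup Q$. Suppose $P$ is dominated. Because $G$ is bipartite, the only vertices that can have an arc to every element of the row-set $P$ are columns, so there is a column $c^\ast$ with an arc to every $r_i\in P$. Against $p$ this column scores $\sum_{r_i\in P}p_i=1$, so the column player's best-response payoff against $p$ equals $1$. The $\epsilon$-WSNE condition then forces every $c_j\in Q$ to have payoff at least $1-\epsilon>0$ against $p$, which means each such $c_j$ has an arc into $P$. Symmetrically, if $Q$ is dominated then the row player's best-response payoff against $q$ is $1$, and every $r_i\in P$ has an arc into $Q$.

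Combining the two conclusions in the remaining case, every vertex of $P\cup Q$ has an out-neighbour inside $P\cup Q$; that is, the subgraph induced on $P\cup Q$ has minimum out-degree at least one. A finite digraph with minimum out-degree at least one contains a directed cycle (follow out-arcs until a vertex repeats). Since $G$ is bipartite this cycle alternates between $P$ and $Q$, hence has even length $2\ell$ using $\ell$ distinct vertices from each side; as $\ell\le\min(|P|,|Q|)\le k$, its length is at most $2k$, as required.

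The step I expect to be the main obstacle is the quantitative translation in the middle paragraph: one must argue that a single dominating vertex pins the relevant best-response value at exactly $1$, and then exploit $\epsilon<1$ (rather than merely $\epsilon\le 1$) to deduce a strictly positive payoff, and hence an actual out-arc, for every strategy in the opposite support. Once that is in place, the existence of a short cycle is a routine consequence of positive out-degree together with bipartiteness, and the undominated cases are immediate.
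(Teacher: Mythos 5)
Your proof is correct and follows essentially the same route as the paper: reduce to the case where both supports are dominated, use the dominating vertices to pin the best-response payoffs at $1$, invoke $\epsilon<1$ to give every support vertex a strictly positive payoff and hence an out-arc within $P\cup Q$, and extract a cycle of length at most $2k$ from the resulting induced subgraph of minimum out-degree one. Your treatment of the undominated case (padding an undominated support up to cardinality $k$ via the superset observation) is in fact slightly more careful than the paper's one-line dismissal of that case.
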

\begin{proof}
Take a win-lose game $G=(R \cup C,E)$ and let $p$ and $q$ be an $\epsilon$-WSNE.
Suppose the supports of $p$ and $q$, namely $P\subseteq R$ and $Q\subseteq C$, have 
cardinality at most $k$. 

We may assume that every set of cardinality every set of $k$ (on the same side of the bipartition) is dominated; otherwise 
we are already done. In particular, both $P$ and $Q$ are dominated.
Consequently, the row player has a best response with expected payoff $1$ against
$q$. Similarly, the column player has a best response with expected payoff $1$ against
$p$. Thus, for the $\epsilon$-WSNE $\{p, q\}$, we have:
\begin{eqnarray*}
\forall i: \ \ {p}_i > 0 &\Rightarrow& \quad {e_i}^TRq \geq 1 - \epsilon >0 \\
\forall j: \ \ {q}_j > 0 &\Rightarrow& \quad {p}^TCe_j \geq 1- \epsilon >0 
\end{eqnarray*}
Here the strict inequalities follow because $\epsilon <1$.
Therefore, in the subgraph $F$ induced by $P\cup Q$, every vertex 
has an out-degree at least one. But then $F$ contains a cycle $W$. 
Since $H$ contains at most $2k$ vertices, the cycle $W$ has length at most $2k$.
\end{proof}

Corollary \ref{cor:main6} and Lemma \ref{lem:mainconverse} then give the following characterization for win-lose games 
with $\epsilon$-WSNE with small cardinality supports

\begin{theorem}\label{thm:char}
Let $G$ be a win-lose game with minimum out-degree at least one. 
Take any constant $k$ and any $\epsilon$ such that $1-\frac{1}{k} \leq \epsilon <1$.
The game contains an $\epsilon$-WSNE with supports of cardinality at most $k$ 
if and only if $G$ contains 
an undominated set of cardinality $k$ or  a cycle of length at most $2k$. 
\qed
\end{theorem}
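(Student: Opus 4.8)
The plan is to assemble the theorem directly from the two directions already established, taking care only of the quantifier ranges. Since the statement is an ``if and only if,'' I would treat the two implications separately, and the entire content is the reconciliation of the slightly different conventions under which the supporting results were phrased.

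For the ``if'' direction (structure implies equilibrium), suppose $G$ contains an undominated set of cardinality $k$ or a cycle of length at most $2k$. Corollary \ref{cor:main6} almost gives this immediately, but I would be careful on two points. First, the corollary produces a $(1-\frac{1}{k})$-WSNE, whereas the theorem asks for an $\epsilon$-WSNE with $\epsilon \geq 1-\frac{1}{k}$; here I would invoke monotonicity of the WSNE condition in $\epsilon$. Relaxing the approximation parameter only weakens the requirement ``payoff at least best-response payoff minus $\epsilon$'' on each support strategy, so any $(1-\frac{1}{k})$-WSNE is automatically an $\epsilon$-WSNE for every $\epsilon \geq 1-\frac{1}{k}$, with the same (hence at most $k$) supports. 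Second, Lemma \ref{lem:cycle6} is phrased for a cycle of length exactly $2k$, while the theorem allows any cycle of length at most $2k$. Since $G$ is bipartite, every cycle has even length $2k'$ for some $k' \leq k$; applying Lemma \ref{lem:cycle6} to such a cycle yields a $(1-\frac{1}{k'})$-WSNE with support of size $k' \leq k$, and since $1-\frac{1}{k'} \leq 1-\frac{1}{k} \leq \epsilon$, this is again an $\epsilon$-WSNE with support at most $k$.

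For the ``only if'' direction (equilibrium implies structure), I would simply quote Lemma \ref{lem:mainconverse}: it already asserts that an $\epsilon$-WSNE with support at most $k$, for \emph{any} $\epsilon < 1$, forces either an undominated set of cardinality $k$ or a cycle of length at most $2k$. Since the theorem's range $1-\frac{1}{k} \leq \epsilon < 1$ lies inside $\epsilon < 1$, the lemma applies verbatim and no further work is needed.

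The main obstacle, such as it is, is bookkeeping rather than mathematics: the two supporting results are stated with slightly different conventions (exact versus bounded cycle length, a fixed versus a ranged approximation parameter), so the only care required is to reconcile these using the bipartiteness of $G$ (to force even cycle lengths) and the monotonicity of the $\epsilon$-WSNE condition in $\epsilon$ (to pass from $1-\frac{1}{k}$ up to any admissible $\epsilon$). Once these are checked, the equivalence follows by chaining Corollary \ref{cor:main6} with Lemma \ref{lem:mainconverse}, which is why the statement can be closed with \qed.
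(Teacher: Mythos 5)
Your proposal is correct and matches the paper's own treatment, which derives the theorem directly by combining Corollary~\ref{cor:main6} with Lemma~\ref{lem:mainconverse}. Your extra bookkeeping (monotonicity of the WSNE condition in $\epsilon$, and handling cycles of length $2k'$ for $k' \leq k$ via bipartiteness) is exactly the reconciliation the paper leaves implicit, and it is carried out correctly.
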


\section{Digraphs of Large Girth with every Small Subset Dominated}\label{sec:counterex}
In this section, we will first prove that there exists a finite $(k,l)$-digraph
for every pair of positive integers $k$ and $l$ and, hence, disprove Conjecture \ref{conj:kl}.
We then adapt the resulting counterexamples in order to apply Theorem \ref{thm:char}
and deduce that, for any constant $k$ and any $\epsilon<1$, there exist bimatrix win-lose games 
for which every $\epsilon$-WSNE require supports of cardinality greater than $k$.

The main tool we require is a result of Haight~\cite{Hai73} from additive number theory. 
We will require the following notation. Let $\Gamma$ be an additive group. Then, for $X \subseteq \Gamma$, denote 
\begin{eqnarray*} 
X-X &=& \{x_1 -x_2 \: | \: x_1,x_2 \in X\},\: \mathrm{and} \\
(k)X &=& \{x_1+x_2+\ldots+x_k \: | \: x_i \in X \:\mathrm{for} \: 1 \leq i \leq k\}.
\end{eqnarray*}
Finally, let $\bb{Z}_q=\{0,1,\dots, q-1\}$ denote the additive group of $\bb{Z}/q\bb{Z}$, the integers modulo $q$.
Haight~\cite{Hai73} proved:
\begin{theorem}\cite{Hai73}\label{thm:Haight}
For all positive integers $k$ and $l$, there exists a positive integer $\hat{q}=\hat{q}(k,l)$ and a 
set $X \subseteq  \bb{Z}_{\hat{q}}$, such that $X-X = \bb{Z}_{\hat{q}}$, 
but $(k)X$ omits $l$ consecutive residues. \qed
\end{theorem}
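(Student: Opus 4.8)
The plan is to build the pair $(\hat q, X)$ by hand, exploiting the one asymmetry available between sums and differences in $\bb{Z}_q$: writing elements as nonnegative representatives in $\{0,1,\dots,q-1\}$, a $k$-fold sum lives in $[0,k(q-1)]$ and can only \emph{carry upward}, whereas a difference $x_1-x_2$ may be negative and so \emph{wrap around} through $0$. First I would recast the goal over the integers: it suffices to produce a finite $X\subseteq\{0,\dots,q-1\}$ whose differences already realize every residue and whose integer sumset $(k)X$, once reduced modulo $q$, leaves a block of $l$ consecutive residues untouched.

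The engine of the argument is an \emph{amplification} step that reduces the target length $l$ to a fixed small one. Suppose I am handed a ``base gadget'': a modulus $b$ and a set $Y\subseteq\bb{Z}_b$ with $Y-Y=\bb{Z}_b$ for which $(k)Y$ omits an interval of length at least $k$. Let $M$ be any modulus admitting a difference cover $Z\subseteq\bb{Z}_M$ (that is, $Z-Z=\bb{Z}_M$; such $Z$ exist for every $M$, e.g.\ a square-root ruler), and set $\hat q=bM$ and $X=\{\,yM+z : y\in Y,\ z\in Z\,\}$. For the difference condition, any target $uM+v$ with $v\in[0,M)$ is hit by choosing $z-z'\equiv v$ and $y-y'\equiv u$ (adjusting $u$ by one when the chosen representative of $z-z'$ is negative), so $X-X=\bb{Z}_{bM}$. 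For the sum condition, a $k$-fold sum equals $(\sum_t y_t)M+\sum_t z_t$; since $\sum_t z_t\in[0,k(M-1)]$ its high part contributes a carry $c\in\{0,\dots,k-1\}$, so the top coordinate of $(k)X$ ranges only over $(k)Y+\{0,\dots,k-1\}\ (\mathrm{mod}\ b)$. Because $(k)Y$ skips an interval of length at least $k$, this shifted set still skips an interval of length at least one, and for each such forbidden top coordinate \emph{all} $M$ low residues are forbidden as well. Hence $(k)X$ omits at least $M$ consecutive residues, and taking $M$ large gives any desired $l$.

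Everything therefore comes down to the base gadget, and this is the step I expect to be the real obstacle. The difficulty is that the tension is scale invariant: forcing $Y-Y=\bb{Z}_b$ requires $Y$ to be spread throughout $\bb{Z}_b$, but a spread-out set typically has $(k)Y$ equal to all of $\bb{Z}_b$, so merely opening a gap of length $k$ is already delicate. As a sanity check, the case $k=2$ is easy: a planar (Singer) difference set $Y$ in $\bb{Z}_b$ has $Y-Y=\bb{Z}_b$ yet $|Y+Y|\le\binom{|Y|+1}{2}$ is far below $b$, so $Y+Y$ must omit residues, and a little care in the choice makes the omitted residues include a short interval. For general $k$ the counting is useless, since $(k)Y$ may legitimately be almost all of $\bb{Z}_b$, and one must instead \emph{engineer} additive cancellations that leave a hole. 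This is exactly the content supplied by Haight's construction, which produces such a $Y$ through a carefully staged multi-scale choice; I would invoke it (or reconstruct its iterated scheme) precisely here. The crux, and where essentially all the work lives, is reconciling within a single parameter choice the upward-only carrying of sums, which must be kept out of a reserved window, with the wrap-around borrowing of differences, which must still reach every residue.
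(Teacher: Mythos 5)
The paper does not actually prove this statement: it is imported verbatim from Haight~\cite{Hai73} and stamped with a \qed, so the only ``proof'' in the paper is the citation. Your proposal should therefore be judged as a standalone attempt, and as such it has a genuine gap. The amplification step is fine --- I checked the carry analysis: with $\hat q=bM$ and $X=\{yM+z\}$, a $k$-fold sum has low part $\sum_t z_t\in[0,k(M-1)]$, hence carry $c\in\{0,\dots,k-1\}$, so the top coordinate lies in $(k)Y+\{0,\dots,k-1\}$, and if $(k)Y$ omits the $k$ consecutive residues $a,\dots,a+k-1$ then $a+k-1$ survives as a forbidden top coordinate, giving $M$ consecutive forbidden residues in $\bb{Z}_{bM}$; the difference cover property also lifts correctly with the $\pm 1$ adjustment to $u$. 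But the ``base gadget'' you require --- a set $Y$ with $Y-Y=\bb{Z}_b$ and $(k)Y$ omitting $k$ consecutive residues --- is precisely the instance $l=k$ of the theorem you are proving. You explicitly punt on it (``I would invoke it \dots precisely here''), and your own discussion explains why no soft argument will do: counting fails for $k\ge 3$ because $(k)Y$ may legitimately be nearly all of $\bb{Z}_b$. So what you have is a correct reduction from general $l$ to $l=k$, plus a citation of the result being established. That is circular as a proof of the statement.

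It is also worth noting that the reduction, while valid, buys nothing for this paper: Corollary~\ref{cor:avoidZero} invokes Theorem~\ref{thm:Haight} only with $l=k$ (any $k$ consecutive residues contain a multiple of $\gcd(k,\hat q)$, hence some $ky$), so the case you defer is exactly and only the case the paper needs. To close the gap you would have to actually carry out Haight's multi-scale construction of the base set --- the staged choice of moduli in which the $k$-fold sums are trapped by upward-only carries while differences still borrow across every scale --- and that is where essentially all of the work of \cite{Hai73} lives.
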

To construct the finite $(k,l)$-digraph we will use the following corollary.
\begin{corollary}\label{cor:avoidZero}
For every positive integer $k$, there exists a positive integer $q=q(k)$ and a 
set $Y \subseteq  \bb{Z}_q$, such that $Y-Y = \bb{Z}_q$, 
but $0 \not \in (k)Y$.
\end{corollary}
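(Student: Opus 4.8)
The plan is to derive the corollary directly from Haight's result (Theorem \ref{thm:Haight}) by applying it with the parameters $k$ and $l=k$, and then translating the resulting set so that the omitted block of residues is forced to contain $0$.

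First I would invoke Theorem \ref{thm:Haight} with $l=k$ to obtain an integer $\hat q=\hat q(k,k)$ and a set $X\subseteq \bb{Z}_{\hat q}$ with $X-X=\bb{Z}_{\hat q}$ whose sumset $(k)X$ omits $k$ consecutive residues, say $\{a,a+1,\ldots,a+k-1\}$. The key observation is that a translate $Y=\{x-c : x\in X\}$ inherits the full difference-set property, since $Y-Y=X-X=\bb{Z}_{\hat q}$, while its $k$-fold sumset merely shifts: summing $k$ elements $x_i-c$ gives $\sum x_i - kc$, so $(k)Y=(k)X-kc$. Thus, to force $0\notin (k)Y$ it suffices to choose $c$ so that $kc$ lands inside the omitted block, i.e. $kc\in\{a,\ldots,a+k-1\}$; for then $0=kc-kc\notin (k)X-kc=(k)Y$.

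The one step that genuinely needs an argument is that such a $c$ exists, and this is the point where taking $l=k$ matters. The multiples $\{kc \bmod \hat q : c\in\bb{Z}_{\hat q}\}$ are exactly the cyclic subgroup $d\bb{Z}_{\hat q}$ generated by $d=\gcd(k,\hat q)$, so consecutive multiples of $k$ are spaced $d$ apart in $\bb{Z}_{\hat q}$. Since $d\le k$, any window of $k$ consecutive residues, and in particular the omitted block $\{a,\ldots,a+k-1\}$, must contain at least one element of $d\bb{Z}_{\hat q}$: among $k\ge d$ consecutive integer representatives there is always one divisible by $d$, and because $d\mid \hat q$ this representative reduces to an element of $d\bb{Z}_{\hat q}$. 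Hence some value $kc$ lies in the omitted block.

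Setting $q=\hat q$ and $Y=\{x-c : x\in X\}$ for this choice of $c$ then completes the proof: $Y-Y=\bb{Z}_q$ and $0\notin (k)Y$. I expect the only subtlety to be the spacing argument via $\gcd(k,\hat q)$ in the previous paragraph; the rest is bookkeeping on how translation interacts with difference sets and sumsets.
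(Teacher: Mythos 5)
Your proof is correct and follows essentially the same route as the paper: apply Haight's theorem with $l=k$, observe that the omitted block of $k$ consecutive residues must contain a multiple $kc$ of $k$, and translate $X$ by $c$ so that $0\notin (k)Y$ while $Y-Y=X-X=\bb{Z}_q$. In fact you supply the $\gcd(k,\hat q)$ spacing argument that the paper only asserts, so your write-up is, if anything, more complete.
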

\begin{proof}
Let $l=k$ and apply Theorem~\ref{thm:Haight} with $q=q(k)=\hat{q}(k,k)$. 
Thus, we obtain a set $X \subseteq \bb{Z}_q$ with the properties that: (i) $X-X = \bb{Z}_q$,
and (ii) $(k)X$ omits $k$ consecutive residues. 
But these $k$ consecutive residues must contain $ky$ for some $y \in \bb{Z}_q$.
Thus, there exists $y \in \bb{Z}_q$ such that $ky \not \in (k)X$. 

Now, define $Y:=X-y$. Then $Y-Y = X-X =  \bb{Z}_q$.
Furthermore, $ky \not \in (k) (Y+y)$. This implies that $0 \not \in (k)Y$, as desired.
\end{proof}

We now construct a counter-example to Conjecture \ref{conj:myers} of Myers. We will then show how
the construction can be extend to disprove Conjecture \ref{conj:kl}.
\begin{theorem}\label{thm:k2digraph}
For any positive integer $\kappa$, there exists a $(\kappa,2)$-digraph $D$.
\end{theorem}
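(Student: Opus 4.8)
The plan is to construct $D$ explicitly using the set $Y \subseteq \bb{Z}_q$ provided by Corollary~\ref{cor:avoidZero}. The vertex set will be $\bb{Z}_q$, and I would draw an arc from vertex $a$ to vertex $b$ precisely when $b - a \in Y$ (a Cayley-type digraph on $\bb{Z}_q$ with connection set $Y$). The two properties of $Y$ should translate directly into the two requirements of a $(\kappa,2)$-digraph. First I would verify that every pair of vertices is dominated. Given any two vertices $a, b$, I want a single vertex $v$ with arcs to both, i.e. $a - v \in Y$ and $b - v \in Y$; equivalently I need some $y_1, y_2 \in Y$ with $a - y_1 = b - y_2$, that is $a - b = y_1 - y_2 \in Y - Y$. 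Since $Y - Y = \bb{Z}_q$ by the corollary, such $y_1, y_2$ always exist, so $v = a - y_1$ dominates $\{a,b\}$. This handles domination of all $2$-subsets.

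Next I would control the girth. A directed cycle of length $t$ corresponds to a sequence of arcs whose increments $y_1, \dots, y_t \in Y$ sum to $0$ modulo $q$, since returning to the start requires $\sum_i y_i \equiv 0$. The property $0 \notin (k)Y$ guarantees no sum of exactly $k$ elements of $Y$ (with repetition, matching the definition of $(k)Y$) equals $0$. So with $k = \kappa$ I would conclude there is no closed walk of length exactly $\kappa$ with all increments in $Y$, hence no directed cycle of length $\kappa$. The subtlety to address is that I must rule out \emph{all} cycle lengths below $\kappa$, not merely length $\kappa$ itself. I expect this to be the main obstacle, and I anticipate the fix is to apply the corollary with a suitably chosen parameter: the natural move is to invoke Corollary~\ref{cor:avoidZero} not just for the single value $\kappa$ but to obtain a $Y$ for which $0 \notin (j)Y$ for every $j$ with $1 \le j \le \kappa - 1$ simultaneously.

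To secure that stronger avoidance, I would revisit how the corollary is proved: Haight's theorem (Theorem~\ref{thm:Haight}) gives a set $X$ for which $(k)X$ omits $l$ \emph{consecutive} residues, and the corollary exploits consecutiveness to find a translate killing $0$ from $(k)Y$. I expect the same mechanism, applied with $l$ large enough (say $l$ comfortably exceeding $\kappa$), to let me translate so that a whole window of sums $(j)Y$ avoids $0$ across all the relevant $j \le \kappa - 1$ at once; alternatively, one observes that the sets $(j)X$ for $1 \le j \le \kappa-1$ are nested or can be simultaneously shifted away from $0$ using a single translate inside the long omitted interval. Concretely, I would choose the translation amount $y$ to lie inside the block of omitted consecutive residues and track how $(j)Y = (j)X - jy$ interacts with $0$, arranging the parameters $(k,l)$ in Haight's theorem so that the needed multiples $jy$ all fall into omitted positions.

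Finally, with such a $Y$ in hand, I would assemble the argument cleanly: define $D$ as the Cayley digraph on $\bb{Z}_q$ with connection set $Y$; cite $Y - Y = \bb{Z}_q$ to show every $2$-set is dominated; and cite the strengthened avoidance $0 \notin (j)Y$ for $1 \le j \le \kappa - 1$ to show every directed cycle has length at least $\kappa$. Together these establish that $D$ is a $(\kappa,2)$-digraph, completing the proof. The essential insight is that the additive-combinatorial dichotomy in Haight's result ($X - X$ is everything while high-fold sumsets miss an interval) is \emph{exactly} the dual of the graph-theoretic dichotomy being refuted (small sets dominated versus short cycles), so the translation is structural rather than computational.
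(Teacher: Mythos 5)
Your construction and your domination argument are exactly the paper's: the Cayley digraph on $\bb{Z}_q$ with connection set $Y$ from Corollary~\ref{cor:avoidZero}, with $Y-Y=\bb{Z}_q$ giving a common dominator for every pair. You also correctly identify the one real obstacle, namely that $0\notin(k)Y$ for a single $k$ only forbids closed walks of length exactly $k$, not all lengths below $\kappa$. But your proposed fix for that obstacle has a genuine gap. You want to re-engineer the translate in Corollary~\ref{cor:avoidZero} so that $0\notin(j)Y$ for every $j\le\kappa-1$ simultaneously, by choosing $y$ so that each $jy$ lands in an omitted block. The trouble is that Theorem~\ref{thm:Haight}, as stated, controls only the single sumset $(k)X$; it says nothing about where $(j)X$ sits for $j<k$, so there is no reason any choice of $y$ inside the omitted interval of $(k)X$ would make $jy$ avoid $(j)X$ for the smaller $j$. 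The fallback claim that the sets $(j)X$ are nested is also false in general (it would essentially require $0\in X$, which is exactly what the construction must avoid). So as written, the girth argument does not close.

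The paper closes it with a divisibility trick rather than a strengthened additive input: apply Corollary~\ref{cor:avoidZero} with $k=(\kappa-1)!$. If $D$ had a directed cycle of length $s\le\kappa-1$, summing the arc increments gives $0\in(s)Y$, and concatenating $t$ copies of that relation gives $0\in(ts)Y$ for every positive integer $t$; taking $t=(\kappa-1)!/s$ (an integer since $s\le\kappa-1$) yields $0\in\bigl((\kappa-1)!\bigr)Y=(k)Y$, a contradiction. This single observation, that $0\in(s)Y$ implies $0\in(ts)Y$, is the missing idea; with it, the single-$k$ form of Haight's theorem suffices and no simultaneous avoidance is needed. Everything else in your write-up is sound and matches the paper.
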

\begin{proof}
Set $k= (\kappa-1)!$ and apply Corollary~\ref{cor:avoidZero}.
Thus we find $Y \subseteq  \bb{Z}_q$, with $q=q(k)$ where $Y-Y = \bb{Z}_q$, 
and $0 \not \in (k)Y$.
From $Y$, we create a directed graph $D$ as follows.
Let the vertex set be $V(D)=\bb{Z}_q$. Let the arc set be 
 $E(D)=\{z_1z_2 \: | \: z_1 -z_2 \in Y\}$.

Now take any pair $z_1,z_2 \in \bb{Z}_q$.
Because $Y-Y = \bb{Z}_q$, there exist $y_1,y_2 \in Y$ such that $z_1-z_2=y_1-y_2$.
We now claim that the vertex pair $z_1, z_2\in V(D)$ is dominated. 
To see this consider the vertex $x\in V(D)$ where $x=z_1+y_2=z_2+y_1$.
Then $xz_1$  is an arc in $E(D)$ because $x-z_1=(z_1+y_2)-z_1=y_2\in Y$.
On the other hand $x=z_2+y_1$ and so $x-z_2=y_1\in Y$.
Consequently, $xz_2$ is also in $E(D)$.
Hence, every subset of $V(D)$ of cardinality at most $2$ is dominated.

It remains to prove that $D$ contains no directed cycle of length less than $\kappa$.
So, assume there is a cycle $C$ with ordered vertices $z_1,z_2,\ldots, z_s$, where $s<\kappa$.
As $z_iz_{i+1}$ is an arc we have that $z_i-z_{i+1}=y_i$ where $y_i\in Y$, for $1\leq i \leq s$
(here we assume $z_{s+1}=z_1$). Summing around the cycle we have that $y_1+y_2+\cdots +y_s=0$ modulo $q$.
This implies that $0 \in (s)Y$ as $y_1,y_2,\ldots,y_s \in Y$.
Consequently, $0 \in (ts)Y$ for any positive integer $t$. In particular, $0 \in (k)Y=((\kappa-1)!)Y$, as $s\le \kappa-1$.
This contradicts the choice of $Y$ and, so, $D$ is a $(\kappa, 2)$-digraph, as desired.    
\end{proof}

\begin{theorem}\label{thm:klexist} 
For every pair of positive integers $k$ and $l$, there exists a finite $(k,l)$-digraph.
\end{theorem}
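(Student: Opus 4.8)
The plan is to realize the required digraph as a Cayley digraph on a finite abelian group $\Gamma$, so that the two parameters decouple into purely additive conditions on a connection set $Y\subseteq\Gamma$: declaring an arc $z_1z_2$ exactly when $z_1-z_2\in Y$, a directed cycle of length $s$ forces $0\in (s)Y$, while a set $\{z_1,\dots,z_l\}$ is dominated precisely when the translates $z_1+Y,\dots,z_l+Y$ have a common point. Thus it suffices to produce $Y$ with $0\notin (s)Y$ for all $1\le s\le k-1$ (girth at least $k$) and with every $l$ translates of $Y$ intersecting (every $l$-set dominated). As in Theorem~\ref{thm:k2digraph}, the girth requirement reduces to the single condition $0\notin (K)Y$ with $K=(k-1)!$, since $0\in (s)Y$ implies $0\in (ts)Y$ and $s\mid K$ for every $s\le k-1$.

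The heart of the argument is an amplification step that manufactures the ``$l$ translates meet'' property out of the same property for fewer translates. I would work in a product group $\Gamma=\mathbb{Z}_{q}^{\,n}$ whose coordinates each carry a $(k,l')$-digraph with connection set $X'\subseteq\mathbb{Z}_q$ (so $0\notin (s)X'$ for $s<k$, and any $l'$ translates of $X'$ meet), and take $Y=\{w:\ w_i\notin X'\text{ for at most }r\text{ coordinates } i\}$. For girth, a relation $\sum_{t=1}^{s}w^{(t)}=0$ with $s\le k-1$ and $w^{(t)}\in Y$ involves at most $(k-1)r$ ``bad'' coordinate slots, so once $n>(k-1)r$ some coordinate $i^\ast$ is good throughout; its column sum then lies in $(s)X'$ yet equals $0$, contradicting $0\notin (s)X'$. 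For domination, given an $l$-set I would distribute its members among the $n$ coordinates so that each coordinate receives at most $l'$ of them while each member is assigned to at least $n-r$ coordinates (a bipartite degree-constrained selection, feasible when $l(n-r)\le nl'$); picking in each coordinate a dominator for its $\le l'$ assigned projections via the $(k,l')$ property yields a vector that misses each member in at most $r$ coordinates, hence dominates the whole $l$-set. Balancing $r\ge n(l-l')/l$ against $r<n/(k-1)$ shows this reduces a $(k,l)$-digraph to a $(k,l')$-digraph whenever $l'>l(k-2)/(k-1)$, so iterating drives $l$ down by a constant factor per stage and disposes of every $l$ that is large relative to $k$.

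The main obstacle is the base of this recursion. The reduction lowers $l$ only by a $\tfrac{1}{k-1}$ fraction, so it stalls once $l$ is comparable to $k$, and—because a single coordinate equipped with the Theorem~\ref{thm:k2digraph} digraph handles only pairs—it cannot even start from $l'=2$ when $k\ge 4$ (there $l<2(k-1)/(k-2)\le 3$). Hence the real work is a direct construction of $(k,l)$-digraphs in the range $2\le l\le k-1$, and this is exactly where the full force of Haight's theorem is needed. The intersection-of-$l$-translates condition is genuinely stronger than $X-X=\mathbb{Z}_q$: it pushes $Y$ toward high density, which is in direct conflict with $0\notin (s)Y$, and Corollary~\ref{cor:avoidZero} together with the difference-set condition alone—the inputs that suffice for $l=2$—cannot reconcile this tension for $l\ge 3$. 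I would therefore invoke Theorem~\ref{thm:Haight} with its sharper conclusion, that $(k')X$ omits $l$ \emph{consecutive} residues, and exploit this omitted block of length $l$, rather than any crude density bound, to engineer a connection set that simultaneously avoids $0$ in all small sumsets and meets every $l$ translates. Getting this base construction right, and checking that it dovetails with the amplification above, is the step I expect to be most delicate.
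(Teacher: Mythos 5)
Your proposal has a genuine gap, and you have in fact located it yourself: the recursion never bottoms out. The Cayley-digraph framework and the product/amplification step are fine as far as they go (the translation of girth into $0\notin (s)Y$ and of domination into ``every $l$ translates of $Y$ meet'' is correct, and the degree-constrained assignment argument for the product construction is sound), but as you compute, the reduction only lowers $l$ to some $l' > l(k-2)/(k-1)$, so for $k\ge 4$ it can never reach the one base case you actually possess, namely $l'=2$ from Theorem~\ref{thm:k2digraph}. Everything then hinges on a direct construction of $(k,l)$-digraphs for $3\le l\lesssim k$, which you describe only as an intention to ``invoke Theorem~\ref{thm:Haight} with its sharper conclusion'' and ``engineer a connection set.'' That is the entire difficulty of the theorem, and no argument is given for it. As written, the proof establishes nothing beyond what Theorem~\ref{thm:k2digraph} already gives.

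The paper closes exactly this gap with a much lighter device, and it is worth seeing why your search for a stronger additive base case was unnecessary. Instead of strengthening the connection set, one amplifies \emph{graph-theoretically}: take a $((k-1)(l-1)+1,2)$-digraph $D'$ and pass to its $(l-1)$-st power $D$ (an arc $vw$ whenever $D'$ has a walk of length at most $l-1$ from $v$ to $w$). A cycle of length at most $k-1$ in $D$ would yield a closed walk of length at most $(k-1)(l-1)$ in $D'$, violating its girth; and an $l$-set $\{v_1,\dots,v_l\}$ is dominated by chaining pair-domination: choose $z_1$ dominating $\{v_1,v_2\}$ in $D'$, then $z_{i+1}$ dominating $\{z_i,v_{i+2}\}$, so that $z_{l-1}$ reaches every $v_i$ within $l-1$ steps and hence dominates $S$ in $D$. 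Thus pair-domination plus arbitrarily large girth -- precisely the output of Corollary~\ref{cor:avoidZero} -- already suffices for all $l$; the tension you identified between high density of $Y$ and $0\notin(s)Y$ never has to be resolved inside the group. If you want to salvage your write-up, replace the entire amplification-plus-missing-base-case structure with this power construction; alternatively, your product construction could serve as a (more complicated) substitute for the power step, but it still must be fed the $l=2$ case as its only base.
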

\begin{proof}
Without loss of generality, assume $l \geq 2$.
By Theorem~\ref{thm:k2digraph}, there exists a $((k-1)(l-1)+1,2)$-digraph $D'$.
We claim that the $(l-1)$-st power of $D'$ is a $(k,l)$-digraph. 
More precisely, let the digraph $D$ be defined by
$V(D)=V(D')$ and $vw \in E(D)$ if and only if there exists a directed walk from $v$ to $w$ in $D'$ using at most $(l-1)$ edges.

Suppose $D$ has a cycle of length at most $k-1$. This corresponds to a closed directed walk of length a most $(k-1)(l-1)$ in $D'$.
This is a contradiction as $D'$ has no cycles of length shorter than $(k-1)(l-1)+1$. 
Therefore, the shortest directed cycle in $D$ has length at least $k$.

It remains to prove that every $S \subseteq V(D)$ with $|S|=l$ is dominated. 
So take $S=\{v_1,v_2,\ldots,v_l\}$. 
Recall that every pair of vertices in $V(D')=V(D)$ is dominated in $D'$.
So there is a vertex $z_1$ dominating $v_1$ and $v_2$ in $D'$.
Now let $z_{i+1}$ be a vertex dominating $z_i$ and $v_{i+2}$ for $1 \leq i \leq l-2$. 
By construction, there is a directed walk in $D'$ from $z_{l-1}$ to $v_i$ of length at most $l-1$, for every $1 \leq i \leq l$. 
Thus $z_{l-1}v_i\in E(D)$, and $S$ is dominated in $D$, as desired.
\end{proof}

Observe that these constructions are non-bipartite. To exploit the characterization of Theorem~\ref{thm:char}
(and therefore conclude that there are games with no $\epsilon$-WSNE with small supports), 
we desire bipartite constructions. These we can create using a simple mapping from non-bipartite to bipartite graphs.
Given a non-bipartite graph $G=(V, E)$, we build a win-lose game,
that is, a bipartite directed graph $G'=(R\cup C,E')$ as follows.
We set $R=C=V$. Thus, for each $v_i\in V$ we have a row vertex $r_i\in R$ and a column vertex $c_i\in C$.
Next, for each arc $a=(v_i,v_j)$ in $G$, we create two arcs $(r_i,c_j)$ and $(c_i,r_j)$ in $G'$. Finally,
for each $v_i\in V$ we add an arc $(r_i ,c_i)$.

Now let's understand what this mapping does to cycles and undominated sets.
First, suppose $G$ contains a cycle of length $k$. Then observe that $G'$ contains a cycle of length $k$ if $k$ is even
and of length $k+1$ if $k$ is odd. On the other hand, suppose the minimum length cycle in $G'$ is $k+1$.
This cycle will contain at most one pair of vertices type $\{r_i ,c_i\}$, and if it contains such a pair then these vertices are 
consecutive on the cycle. (Otherwise we can find a shorter cycle in $G'$.) Thus,
$G$ contains a cycle of length $k$ or $k+1$.
 
Second, consider an undominated set $S\subseteq V$ of size $\ell$ in $G$. Then $S\subseteq R$ is
undominated in $G'$. (Note $S\subseteq C$ may be dominated because we added arcs of the form $(r_i, c_i)$ to $G'$). 
On the other hand if $S$ is undominated in $G'$ (either in $R$ or $C$) then $S$ is also undominated in $G$.


Applying this mapping to a non-bipartite $(2k+1,k)$-digraph produces
a bipartite digraph for which every set of $k$ vertices (on the same side of the bipartition) 
is dominated but that has no cycle of length at most $2k$.
Thus, by Theorem \ref{thm:char}, the corresponding game has no $\epsilon$-WSNE, for any $\epsilon<1$,
with supports of cardinality at most $k$.

\begin{theorem}
For any constant $k$ and any $\epsilon<1$, there exist bimatrix win-lose games for which 
every $\epsilon$-WSNE requires supports of cardinality greater than $k$.
\qed
\end{theorem}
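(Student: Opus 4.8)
The plan is to combine the existence of high-girth, highly-dominated digraphs (Theorem~\ref{thm:klexist}) with the bipartite mapping $G \mapsto G'$ described above, and then to invoke the converse half of the characterization (Lemma~\ref{lem:mainconverse}), which holds for every $\epsilon<1$. Concretely, fix the target support bound $k$ and any $\epsilon<1$. By Theorem~\ref{thm:klexist} there is a finite non-bipartite $(2k+1,k)$-digraph $G$: every directed cycle of $G$ has length at least $2k+1$, and every vertex subset of cardinality at most $k$ is dominated. I would apply the mapping to $G$ to obtain a win-lose game $G'=(R\cup C,E')$ and argue that $G'$ satisfies neither of the two sufficient conditions for a small-support $\epsilon$-WSNE.

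First I would record that $G'$ has minimum out-degree at least one, so that Lemma~\ref{lem:mainconverse} applies. Each row vertex $r_i$ receives the arc $(r_i,c_i)$ by construction, so it has an out-neighbour; each column vertex $c_i$ has an out-arc $(c_i,r_j)$ whenever $(v_i,v_j)\in E(G)$, and since the digraphs produced by Theorem~\ref{thm:k2digraph} (and their powers, via Theorem~\ref{thm:klexist}) have every vertex of positive out-degree, $G$ has minimum out-degree at least one and hence so does $G'$.

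Next I would verify the two key properties. For undominated sets: by the mapping, if a set $S$ is undominated in $G'$ (on either side of the bipartition) then $S$ is undominated in $G$; contrapositively, since every subset of $G$ of size at most $k$ is dominated, $G'$ contains no undominated set of cardinality $k$. For cycles: let $m$ be the girth of $G'$. The mapping analysis shows that a shortest cycle of $G'$ uses at most one added arc of the form $(r_i,c_i)$, occurring consecutively, so it projects to a directed cycle of $G$ of length $m$ or $m-1$. Because $G$ has girth at least $2k+1$, in either case $m\ge 2k+1$, whence $G'$ has no cycle of length at most $2k$.

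With both conditions excluded, the contrapositive of Lemma~\ref{lem:mainconverse}, valid for every $\epsilon<1$, shows that $G'$ admits no $\epsilon$-WSNE with supports of cardinality at most $k$; equivalently, every $\epsilon$-WSNE of the corresponding win-lose game requires supports of cardinality greater than $k$, as claimed. The only genuinely delicate step is the cycle bound: the extra diagonal arcs $(r_i,c_i)$ introduced by the mapping could in principle create short cycles, so the main obstacle is to confirm that a shortest cycle in $G'$ uses at most one such arc and therefore cannot be shorter than the girth of $G$ permits. This is exactly the even/odd parity bookkeeping of the mapping discussion, and it is what forces the choice of girth parameter $2k+1$, rather than $2k$, in the starting $(2k+1,k)$-digraph.
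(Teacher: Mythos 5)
Your proposal is correct and follows essentially the same route as the paper: apply the bipartite mapping to a $(2k+1,k)$-digraph from Theorem~\ref{thm:klexist}, check that the image has no small cycle and no small undominated set, and conclude via the converse direction of the characterization (the paper cites Theorem~\ref{thm:char}, whose relevant half is exactly Lemma~\ref{lem:mainconverse}). Your explicit verification of the minimum out-degree hypothesis and of the cycle-length bookkeeping for the added diagonal arcs $(r_i,c_i)$ only makes explicit details the paper leaves implicit.
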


\end{document}